\RequirePackage{xcolor}
\documentclass[journal,twoside,web]{ieeecolor}
\usepackage{generic}

\usepackage{subcaption}
\usepackage{cite}

\usepackage{amssymb,amsmath,latexsym,amsfonts,amsthm,mathtools}

\usepackage[colorlinks=true]{hyperref}
\hypersetup{colorlinks, breaklinks, citecolor=blue, linkcolor=blue, urlcolor=blue}
\usepackage{graphicx}
\usepackage{float}
\usepackage{textcomp}
\definecolor{darkpastelpurple}{rgb}{0.59, 0.44, 0.84}
\usepackage[nameinlink]{cleveref}
\Crefformat{figure}{#2Fig.~#1#3}
\Crefmultiformat{figure}{Figs.~#2#1#3}{ and~#2#1#3}{, #2#1#3}{ and~#2#1#3}

\usepackage{tikz}
\usetikzlibrary{automata, shapes, arrows, calc, arrows.meta, fit, positioning}

\theoremstyle{plain}
\newtheorem{theorem}{Theorem}
\newtheorem{lemma}{Lemma}

\newtheorem*{problem*}{Problem}
\theoremstyle{remark}
\newtheorem{remark}{Remark}
\newtheorem{assumption}{Assumption}
\crefname{assumption}{assumption}{assumptions}
\Crefname{assumption}{Assumption}{Assumptions}

\newtheorem{definition}{Definition}
\theoremstyle{definition}

\newcommand{\sign}{\mathrm{sign}}
\usepackage{mathrsfs}

\usepackage{newtxmath}
\usepackage{booktabs}
\usepackage{duckuments}

\begin{document}

	\title{Predefined-Time Leaderless Consensus Under Denial-of-Service Attacks}
	\author{Lohitvel Gopikannan,~\IEEEmembership{Student Member, IEEE}, Shashi Ranjan Kumar,~\IEEEmembership{Senior Member, IEEE}, \newline and Abhinav Sinha,~\IEEEmembership{Senior Member,~IEEE}
		\thanks{L. Gopikannan and S. R. Kumar are with the Intelligent Systems \& Control (ISaC) Lab, Department of Aerospace Engineering, Indian Institute of Technology Bombay, Mumbai 400076, India. (e-mails: lohitvel@aero.iitb.ac.in, srk@aero.iitb.ac.in). A. Sinha is with the Guidance, Autonomy, Learning, and Control for Intelligent Systems (GALACxIS) Lab, Department of Aerospace Engineering, University of Cincinnati, OH 45221, USA. (e-mail: abhinav.sinha@uc.edu).}
	}
	\maketitle

	\begin{abstract}
    This paper addresses predefined-time resilient consensus of leaderless second-order nonlinear multi-agent systems under denial-of-service (DoS) attacks, motivated by coordination requirements in safety-critical applications. The agents are subject to bounded external disturbances and communicate over a strongly connected directed graph whose links are simultaneously disabled during attacks. We develop a switching sliding-mode protocol with the objective of reaching an invariant manifold of position and velocity agreement. The protocol uses relative position and velocity information during attack-free intervals and local velocity feedback during communication blackouts. A time-scaling function remains constant during each blackout and resumes evolving when communication is restored, accounting for the time available for consensus. Under bounds on attack duration and frequency, we derive sufficient gain conditions through a Lyapunov analysis. We show that, despite bounded disturbances, the agents achieve position and velocity consensus by a realistic settling time equal to a prescribed convergence duration plus the cumulative attack duration up to the realistic settling time. The prescribed convergence duration is independent of the initial conditions, and the realistic settling time reduces to that duration in the absence of attacks.
	\end{abstract}
	\begin{IEEEkeywords}
		Denial-of-service attacks, leaderless consensus, predefined-time control, second-order multi-agent systems, sliding-mode control.
	\end{IEEEkeywords}
	
\section{Introduction}\label{sec:introduction}
Coordinated motion planning, e.g., \cite{9448334,9924233,9274339}, relies on information exchanged among agents to organize their motion. In tasks with a scheduled cooperative maneuver, e.g. \cite{9000526}, the time required to reach agreement becomes part of the coordination objective. For second-order agents, both position and velocity disagreement matter because velocity mismatch continues to change relative positions. A denial-of-service (DoS) attack interrupts the information exchange needed to correct those mismatches while the agents continue to evolve. The interaction between physical motion, communication availability, and convergence time motivates resilient consensus for safety-critical multi-agent systems (MASs).

For directed second-order networks, the work in \cite{yu2009second} established consensus conditions that account for nonlinear dynamics and position--velocity coupling. Asymptotic agreement provides no finite completion deadline, which motivated finite-time protocols for double-integrator networks, including the leaderless and leader-following designs of \cite{li2011finite}. A finite-time bound can still depend on the initial disagreement. Fixed-time protocols for nonlinear and disturbed second-order agents~\cite{hong2018novel} address this dependence through settling-time bounds that are uniform over initial conditions. The ensuing design objective is to make a desired convergence duration a transparent controller parameter, which has become a central motivation for prescribed-time consensus~\cite{ning2022fixed}.

The authors in \cite{wang2018prescribed} developed time-scaling methods that prescribe the convergence time for single-integrator consensus and containment problems. Extending such guarantees to second-order motion requires control of both position and velocity. This requirement was addressed in \cite{ding2023prescribed} through prescribed-time observers and formation-tracking controllers for second-order networks with a dynamic leader. The leader trajectory yields the tracking objective, whereas leaderless coordination requires the agents to reach agreement through their mutual interactions. The work in \cite{meng2026practical} advanced prescribed-time coordination to practical scaled consensus in disturbed second-order networks without any cyber-attacks, including a leaderless setting, with estimates of the residual consensus error. 

Sliding-mode designs, albeit robust, must also accommodate the structure of the leaderless disagreement dynamics. Therein graph Laplacian is singular, which precludes using its inverse in the consensus protocol. Integral sliding-mode designs for disturbed second-order agents have shown to achieve asymptotic leaderless consensus and finite-time leader-following consensus~\cite{wang2018sliding}. Nonsingular sliding-mode control with coded event-triggered communication has subsequently yielded predefined-time leaderless consensus~\cite{wei2025further}.

Resilient consensus methods have addressed DoS-induced communication loss in several settings. For example, event-triggered control for nonlinear second-order agents under DoS attacks over directed interaction topologies was developed in \cite{shang2021event,liang2025fixed}. An asymptotic leaderless consensus using event- and self-triggered communication when individual channels are independently attacked was established in \cite{li2026event}. Dual-terminal triggering also supports resilient leaderless average consensus while reducing communication and actuator updates~\cite{li2026dual}. These studies address the availability and use of network information. A scheduled coordination task additionally requires a finite convergence-time guarantee. For second-order agents, that guarantee must accommodate the evolution of velocity and position disagreement throughout each blackout. To this end, the authors in \cite{ye2026prescribed} studied prescribed-time consensus under DoS by introducing a realistic settling time that adds the cumulative attack duration to a prescribed convergence duration. Their leader-following formulation accounts for the communication time lost to attacks. A prescribed-time event-triggered control for leader-following nonlinear second-order systems with time delays, DoS attacks, and external disturbances was presented in the work of \cite{zhu2026prescribed}. Prescribed-time formation under DoS has also been developed for leaderless double-integrator networks with undirected communication graphs and self-triggered updates~\cite{sharma2026selftriggered}.

To the best of the authors' knowledge, sliding-mode coordination laws that achieve leaderless consensus within a prescribed cumulative duration of attack-free communication remain comparatively scarce. The central difficulty is that velocity mismatch and external disturbances continue to drive relative motion during each blackout, while the neighboring-agent information needed to correct the disagreement is unavailable. A controller must govern the agents using locally available information during attacks and guarantee convergence by a realistic settling time that accounts for the cumulative loss of communication.

We address this problem through a sliding-mode framework over a strongly connected directed graph. Our main contributions are as follows. First, we design a switching sliding-mode protocol for predefined-time leaderless consensus of second-order agents with known nonlinear dynamics and bounded disturbances. The position--velocity sliding variable is built from mutual disagreement, and the protocol avoids inversion of the singular graph Laplacian. Local velocity feedback governs the agents during complete communication blackouts. Second, We extend the realistic-settling-time concept of~\cite{ye2026prescribed} to prescribe the cumulative attack-free communication time for leaderless consensus independently of the initial conditions, with the convergence deadline extended by the accumulated outage duration. The extension realizes this timing guarantee through a sliding variable built from mutual position and velocity disagreement, accommodating continued second-order motion under local feedback while the associated time scaling pauses during blackouts. Third, we derive a quantitative disagreement bound that tracks amplification and disturbance accumulation during successive blackouts and their attenuation after communication recovery. The bound relates the convergence rate to the directed graph and control gains, establishing position and velocity agreement while keeping the consensus correction bounded as the time scaling becomes singular at the realistic settling time.

\section{Background and Problem Formulation}
In this work, the communication topology is modeled by a weighted directed graph $\mathcal{G} = (\mathcal{V}, \mathcal{E})$, where $N\geq2$ and $\mathcal{V} = \{1, \ldots, N\}$ is the agent set and $(i,j) \in \mathcal{E} \subseteq \mathcal{V} \times \mathcal{V}$ indicates that agent $j$ receives information from agent $i$. The weighted adjacency matrix $\mathcal{A} = [a_{ij}] \in \mathbb{R}^{N\times N}$ has $a_{ij} > 0$ if $(j,i) \in \mathcal{E}$ and zero otherwise, with $a_{ii} = 0$. The in-degree matrix $\mathcal{D} = \operatorname{diag}\{d_1, \ldots, d_N\}$ has $d_i = \sum_{j=1}^N a_{ij}$, and the graph Laplacian is defined as $\mathcal{L} = \mathcal{D} - \mathcal{A} \in \mathbb{R}^{N\times N}$. $\mathcal{L}$ has a simple zero eigenvalue with all other eigenvalues having positive real parts if and only if the directed network has a directed spanning tree.
\begin{assumption}\label{asm:strong}
    $\mathcal{G}$ is assumed to be strongly connected.
\end{assumption}
\begin{lemma}[\cite{yu2009second}]\label{lem:directed_graph}
Under \Cref{asm:strong}, $\mathcal{L}\mathbf{1}_N=\mathbf{0}_N$, and the Laplacian admits a unique normalized vector $\boldsymbol{\xi}=(\xi_1,\ldots,\xi_N)^\top\in\mathbb{R}_{>0}^N$ satisfying $    \boldsymbol{\xi}^\top\mathcal{L}=\mathbf{0}_N^\top,~
    \boldsymbol{\xi}^\top\mathbf{1}_N=1.$ Define $\Xi=\operatorname{diag}(\xi_1,\ldots,\xi_N)$ and $\hat{\mathcal{L}}=(\Xi\mathcal{L}+\mathcal{L}^\top\Xi)/2$.
The matrix $\hat{\mathcal{L}}$ is symmetric positive semidefinite with $\ker(\mathcal{L})=\ker(\hat{\mathcal{L}})=\operatorname{span}\{\mathbf{1}_N\}$.
The graph constant $    a(\hat{\mathcal{L}}):=
    \min_{\substack{\boldsymbol{\xi}^\top\mathbf{z}=0\\
    \mathbf{z}\in\mathbb{R}^N\setminus\{\mathbf{0}_N\}}}
    \dfrac{\mathbf{z}^\top\hat{\mathcal{L}}\mathbf{z}}{\|\mathbf{z}\|_2^2}$ satisfies $0<a(\hat{\mathcal{L}})\leq\lambda_2(\hat{\mathcal{L}})$, where $\lambda_2(\hat{\mathcal{L}})$ is the second-smallest eigenvalue of $\hat{\mathcal{L}}$.
\end{lemma}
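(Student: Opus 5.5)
We prove the four assertions in order: the kernel facts for $\mathcal{L}$, then existence and positivity of $\boldsymbol{\xi}$, then the properties of $\hat{\mathcal{L}}$, and finally the bounds on $a(\hat{\mathcal{L}})$.

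\textbf{Step 1: kernel facts for $\mathcal{L}$.} The identity $\mathcal{L}\mathbf{1}_N=\mathbf{0}_N$ is immediate, since each row of $\mathcal{L}$ sums to $d_i-\sum_j a_{ij}=0$. A strongly connected graph contains a directed spanning tree. The spectral fact recalled before \Cref{asm:strong} then gives that zero is a simple eigenvalue, so $\ker(\mathcal{L})=\operatorname{span}\{\mathbf{1}_N\}$.

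\textbf{Step 2: the left null vector $\boldsymbol{\xi}$.} Because the zero eigenvalue is simple, the left null space of $\mathcal{L}$ is one-dimensional. To show it is spanned by a positive vector, choose $s>\max_i d_i$ and set $P=I-\mathcal{L}/s$. This matrix is nonnegative, irreducible (by strong connectivity), and row-stochastic. Perron--Frobenius then yields a positive left eigenvector of $P$ for eigenvalue $1$, which is exactly a positive left null vector of $\mathcal{L}$. Normalizing so that $\boldsymbol{\xi}^\top\mathbf{1}_N=1$ gives uniqueness.

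\textbf{Step 3: properties of $\hat{\mathcal{L}}$.} Symmetry holds by construction. Since $\Xi\mathcal{L}$ has zero row sums, $\Xi\mathcal{L}\mathbf{1}_N=0$. Its column sums are the entries of $\boldsymbol{\xi}^\top\mathcal{L}=0$, so they vanish as well. Hence $\hat{\mathcal{L}}$ has zero row sums and nonpositive off-diagonal entries $-(\xi_i a_{ij}+\xi_j a_{ji})/2$. It is therefore the Laplacian of an undirected graph with weights $w_{ij}=(\xi_ia_{ij}+\xi_ja_{ji})/2$, which gives
\[
\mathbf{z}^\top\hat{\mathcal{L}}\mathbf{z}=\tfrac12\sum_{i,j}w_{ij}(z_i-z_j)^2\ge0 .
\]
Every directed edge of $\mathcal{G}$ produces a positive undirected weight, and $\xi_i>0$, so this undirected graph is connected. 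Consequently the quadratic form vanishes iff $\mathbf{z}\in\operatorname{span}\{\mathbf{1}_N\}$, which gives $\ker(\hat{\mathcal{L}})=\operatorname{span}\{\mathbf{1}_N\}=\ker(\mathcal{L})$.

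\textbf{Step 4: bounds on $a(\hat{\mathcal{L}})$.} The feasible set $\{\boldsymbol{\xi}^\top\mathbf{z}=0\}$ does not contain $\mathbf{1}_N$, because $\boldsymbol{\xi}^\top\mathbf{1}_N=1\neq0$. The Rayleigh quotient is homogeneous, so we may minimize over the compact set $\{\boldsymbol{\xi}^\top\mathbf{z}=0,\ \|\mathbf{z}\|_2=1\}$, where the quadratic form is continuous and, by Step 3, strictly positive. The minimum is therefore attained and positive, so $a(\hat{\mathcal{L}})>0$.

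For the upper bound, let $\mathbf{v}_2$ be a unit eigenvector of $\lambda_2(\hat{\mathcal{L}})$; it is orthogonal to $\mathbf{1}_N$. Consider the two-dimensional subspace $\operatorname{span}\{\mathbf{1}_N,\mathbf{v}_2\}$. Its intersection with the hyperplane $\boldsymbol{\xi}^\top\mathbf{z}=0$ contains a nonzero vector $\mathbf{z}=\alpha\mathbf{1}_N+\beta\mathbf{v}_2$. For this vector,
\[
\mathbf{z}^\top\hat{\mathcal{L}}\mathbf{z}=\beta^2\lambda_2\le\lambda_2\big(\alpha^2N+\beta^2\big)=\lambda_2\|\mathbf{z}\|_2^2 ,
\]
where $\|\mathbf{z}\|_2^2=\alpha^2N+\beta^2$ follows from orthogonality. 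Hence $a(\hat{\mathcal{L}})\le\lambda_2(\hat{\mathcal{L}})$.

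The main obstacle is Step 2, the positivity of $\boldsymbol{\xi}$, which requires the Perron--Frobenius argument for irreducible matrices. Every later step relies on it: without $\xi_i>0$ the symmetrized weights $w_{ij}$ could vanish and connectivity of the undirected graph would be lost.
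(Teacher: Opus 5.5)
Your proof is correct and complete. The paper gives no argument for this lemma; it imports it directly from \cite{yu2009second}. Your write-up therefore supplies the standard reasoning behind that citation rather than an alternative to a proof in the text. That reasoning has four parts: Perron--Frobenius applied to the irreducible row-stochastic matrix $I-\mathcal{L}/s$ for the positive left null vector; the identification of $\hat{\mathcal{L}}$ as the Laplacian of the connected undirected graph with weights $(\xi_i a_{ij}+\xi_j a_{ji})/2$; compactness of $\{\boldsymbol{\xi}^\top\mathbf{z}=0,\ \|\mathbf{z}\|_2=1\}$ for $a(\hat{\mathcal{L}})>0$; and a dimension-count intersection with $\operatorname{span}\{\mathbf{1}_N,\mathbf{v}_2\}$ for $a(\hat{\mathcal{L}})\le\lambda_2(\hat{\mathcal{L}})$.

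Two small refinements are worth making, though neither is a gap. First, Perron--Frobenius already gives that the eigenvalue $1$ of $P$ is algebraically simple. Simplicity of the zero eigenvalue of $\mathcal{L}$ in Step~1, and hence one-dimensionality of both null spaces, therefore follows from Step~2 without the spanning-tree characterization quoted in the paper. Second, the orthogonality $\mathbf{v}_2\perp\mathbf{1}_N$ in Step~4 deserves its one-line justification. Step~3 gives $\lambda_2(\hat{\mathcal{L}})>0=\lambda_1(\hat{\mathcal{L}})$, and eigenvectors of a symmetric matrix for distinct eigenvalues are orthogonal.
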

The agents associated with each node of $\mathcal{G}$ are described by the following dynamics:
\begin{align}
\begin{cases}
\dot{x}_i = v_i,\\
\dot{v}_i = f_i + b_i u_i + d_i,
\end{cases}
\quad i = 1, \ldots, N,
\label{eq:agent_dynamics}
\end{align}
where, for the $i$-th agent, $x_i, v_i \in \mathbb{R}^n$ denote the position and velocity, $f_i \coloneqq f(x_i, v_i, t)\in\mathbb{R}^n$ and $b_i \coloneqq b(x_i, v_i, t) \in \mathbb{R}^{n\times n}\neq 0$ are known nonlinear functions representing the inherent dynamics and state-dependent input matrix, respectively, with $u_i$ as the control input to be designed, and $d_i(t)$ denotes the external bounded disturbance. For simplicity, scalar second-order dynamics ($n=1$) are considered throughout. The results can be extended to higher-dimensional systems by employing the Kronecker product.

A DoS attack occurs when an adversary blocks or disrupts the communication
links between agents, preventing the exchange of information over the
affected connections. Let $t_0$ be the initial time, and let $t_m^s$, 
$t_m^e$, $m=1,2,\ldots$, denote the start and end instants of the
$m^\text{th}$ attack. The attack intervals are indexed chronologically, with $t_0^e=t_0\leq t_1^s$ and $t_m^s<t_m^e\leq t_{m+1}^s$. The attacked and attack-free portions of $[t_0,t]$ are
written as $\Upsilon_A(t_0,t) := [t_0,t]\cap\bigcup_{j\geq1}[t_j^s,t_j^e)$
and $\Upsilon_N(t_0,t) := [t_0,t] \setminus \Upsilon_A(t_0,t)$, with
$|\Upsilon_A(t_0,t)| + |\Upsilon_N(t_0,t)| = t - t_0$. For
$t \in [t_{m-1}^e, t_m^s)$, $|\Upsilon_A(t_0,t)| = \sum_{j=1}^{m-1}(t_j^e - t_j^s)$,
whereas for $t \in [t_m^s, t_m^e)$, $|\Upsilon_A(t_0,t)| = \sum_{j=1}^{m-1}(t_j^e - t_j^s) + (t - t_m^s)$. Let $\mathcal{N}_A(t_0,t)$ denote the number of attack onsets in $[t_0,t]$ and they are detectable by the agents.  In this work,
zero-topology attack is considered, whose characteristics are described below.
\begin{definition}[\cite{cao2026distributed}]
Under a zero-topology attack, all communication edges are
disabled during each attack interval and restored during attack-free intervals.
The availability function satisfies $\zeta(t)=0$ during attacks and $\zeta(t)=1$ during attack-free intervals,
so that the time-varying adjacency matrix under DoS attacks is given by
$\mathcal{A}^{\zeta}(t) = [a_{ij}^{\zeta}(t)]$, where
$a_{ij}^{\zeta}(t) = a_{ij}\,\zeta(t)$. 
\end{definition}
\begin{assumption}\label{assum:duration}
There exist constants $\nu_1,\nu_3 > 1$ and $\nu_2,\nu_4 \ge 0$ such that
\begin{equation}
 |\Upsilon_A(t_0,t)| \le \frac{t - t_0}{\nu_1} + \nu_2,\,\,\mathcal{N}_A(t_0,t) \le \frac{t - t_0}{\nu_3} + \nu_4.
\label{eq:assumption_duration}
\end{equation}
\end{assumption}
\begin{remark}\label{remark:energy}
Jamming a communication channel requires the attacker to continuously
expend energy, and since attackers generally operate under a finite energy
budget, attacks cannot persist indefinitely. This physical
constraint justifies bounding the duration and frequency of such attacks \cite{cao2026distributed,ye2026prescribed,li2026event}.
\end{remark}
A fixed convergence deadline requires sufficient communication time and control authority to remove the remaining disagreement.
Online gain tuning can accelerate correction after communication resumes, but the duration and frequency bounds in \Cref{assum:duration} do not specify future recovery instants.
During a complete blackout, increasing relative-state feedback gains cannot restore the missing neighboring-agent information.
A late recovery also compresses the correction into a short interval, i.e., even closing a fixed position mismatch from rest requires peak acceleration that grows inversely with the square of the available recovery time. With finite actuation, this requirement can make the original deadline infeasible.
We adopt the realistic settling time of~\cite{ye2026prescribed} to account for lost communication time while retaining a prescribed duration of attack-free coordination.
\begin{definition}[\cite{ye2026prescribed}]
Let $l$ and $n_u$ be positive integers denoting the state and input dimensions, respectively.
Consider the non-autonomous system $\dot{\psi}(t) = h(\psi,u,t)$, where
$\psi(t) \in \mathbb{R}^l$ is the state vector, $u \in \mathbb{R}^{n_u}$ is the
control vector, and $h : \mathbb{R}^l \times \mathbb{R}^{n_u} \times
\mathbb{R}_+ \to \mathbb{R}^l$ is a nonlinear function satisfying
$h(0,0,t) = 0$. Then the  origin of the system is said to be
globally uniformly resiliently prescribed-time stable if it is globally
uniformly finite-time stable, and the settling time $T(\psi_0)$ admits a finite bound
$t_c^F \in \mathbb{R}_+$ satisfying $T(\psi_0) \le t_c^F$ for all
$\psi_0 \in \mathbb{R}^l$. The bound $t_c^F$, termed the realistic settling time, satisfies $t_c^F = t_c + |\Upsilon_A(0,t_c^F)|$,
with $t_c$ stipulated in advance, independent of the initial conditions
and of the system/controller parameters. Furthermore, there exists a radially unbounded positive-definite function $V : \mathbb{R}^l \times \mathbb{R}_{\geq 0} \to \mathbb{R}_+$ satisfying $ V(\psi, t_m^s) \leq \alpha V(\psi, t_m^{s-})$, $ V(\psi,t_m^e) \leq \alpha V(\psi, t_m^{e-})$ and $\dot{V}(\psi, t) \leq \theta V(\psi, t), \forall\, t \in [0, t_c^F) \cap [t_m^s, t_m^e)$ in the absence of disturbances, where $\alpha \geq 1$ and $\theta > 0$.
\label{def:resilient_prescribed_time}
\end{definition}
\begin{problem*}%[Resilient second-order leaderless consensus]
Set $t_0=0$ and consider the scalar agents \eqref{eq:agent_dynamics} over a graph satisfying \Cref{asm:strong}, subject to zero-topology attacks satisfying \Cref{assum:duration}.
Given a prescribed duration $t_c>0$, chosen independently of the initial states, and a known disturbance bound $\bar{d}\geq0$, design causal switching control inputs $u_i(t)$, $i=1,\ldots,N$, using local measurements and available network information during attack-free intervals, and local state feedback during blackouts.
For each admissible attack history, let us define the realistic settling time as the first instant at which the cumulative attack-free time reaches $t_c$, i.e.,
\begin{equation}\label{eq:realistic_settling_time}
    t_c^F := \inf\left\{t\geq0:\,|\Upsilon_N(0,t)|\geq t_c\right\}.
\end{equation}
For all finite initial states $(x_i(0),v_i(0))\in\mathbb{R}^2$ and all measurable disturbance signals satisfying $|d_i(t)|\leq\bar{d}$ for all agents and $t\geq0$, we require the closed-loop trajectories to satisfy, for every pair $i,j\in\{1,\ldots,N\}$,
\begin{align}
    \lim_{t\to(t_c^F)^-}\left(x_i(t)-x_j(t)\right) =& 0,~
    \lim_{t\to(t_c^F)^-}\left(v_i(t)-v_j(t)\right) = 0.
    \label{eq:consensus}
\end{align}
\end{problem*}
The cumulative attack-free time is continuous, and \Cref{assum:duration} yields $|\Upsilon_N(0,t)|\geq(1-1/\nu_1)t-\nu_2$.
The deadline in \eqref{eq:realistic_settling_time} consequently satisfies
\begin{equation}\label{eq:realistic_settling_bound}
    t_c^F = t_c + |\Upsilon_A(0,t_c^F)|
    \leq \frac{\nu_1(t_c+\nu_2)}{\nu_1-1}.
\end{equation}
The bound in \eqref{eq:realistic_settling_bound} is finite and independent of the initial states.
During a blackout, the cumulative attack-free time remains fixed while relative motion and disturbances continue to change the disagreements.
The controller must regulate the blackout motion through local feedback and use restored communication to drive both disagreements to zero.

\section{Main Results}
We formulate the objective \eqref{eq:consensus} in terms of position and velocity disagreements induced by $\mathcal{L}$. Let $\mathbf{x} = [x_1, \dots, x_N]^\top$ and $\mathbf{v} = [v_1, \dots, v_N]^\top$ denote the global position and velocity state vectors, respectively. The system \eqref{eq:agent_dynamics} can then be written in compact form
\begin{align}
    \dot{\mathbf{x}} &= \mathbf{v}, \\
    \dot{\mathbf{v}} &= \mathbf{F} + \mathbf{B}\mathbf{u}+\mathbf{d},
\end{align}
where $\mathbf{F} = [f_1, \dots, f_N]^\top \in \mathbb{R}^N$, $\mathbf{B} = \operatorname{diag}(b_1, \dots, b_N) \in \mathbb{R}^{N \times N}$, $\mathbf{d} = [d_1, \dots, d_N]^\top \in \mathbb{R}^N$ and $\mathbf{u} = [u_1, \dots, u_N]^\top \in \mathbb{R}^N$. Define the consensus error vectors $\mathbf{Z}_1 = \mathcal{L}\mathbf{x} \in \mathbb{R}^N$ and $\mathbf{Z}_2 = \mathcal{L}\mathbf{v} \in \mathbb{R}^N$, whose $i^\text{th}$ components are $z_{1_i} = \sum_{j=1}^N a_{ij}(x_i - x_j)$ and $z_{2_i} = \sum_{j=1}^N a_{ij}(v_i - v_j)$, representing the local disagreement of agent $i$ in position and velocity, respectively.
It can be noted that $\mathbf{Z}_1 = \mathbf{Z}_2 = \mathbf{0}$ implies consensus among all agents. 

To facilitate the control design, the following sliding surface is introduced:
\begin{equation}\label{eq:sliding_variable}
    \mathbf{S} = \mathbf{Z}_2 + \gamma(t)\mathbf{Z}_1,
\end{equation}
where the time-varying scaling function~\cite{ye2026prescribed} is
\begin{equation}\label{eq:time_scaling}
    \gamma(t) = \begin{cases}
        \dfrac{c}{t_c+|\Upsilon_A(0,t)| - t}, & 0 \leq t < t_c^F, \\
        0, & t \geq t_c^F,
    \end{cases}
\end{equation}
with $c \geq 2$. 
\begin{remark}
The scaling function $\gamma(t)$ is not unique and may be chosen by the
user; several such choices exist in the literature; e.g., see \cite{yang2026finite}. In general,
for $T>0$, one may define $s(t)$ with $s(0)=0$,
$s(T)=+\infty$, strictly monotonically increasing and continuously
differentiable, giving the scaling function $\gamma(T,t) := \dot{s}(t)$, which
similarly satisfies $\gamma(T,t) \to \infty$ as $t \to T$. Although this
singularity appears restrictive, remedies have been proposed in the
literature \cite{yang2026finite}. As this work focuses on consensus protocol design
rather than the scaling function itself, we leave the choice of
$\gamma(t)$ to the user.
\end{remark}
\begin{theorem}\label{thm:consensus}
Consider the scalar agents \eqref{eq:agent_dynamics} under \Cref{asm:strong,assum:duration}, with $|d_i(t)|\leq\bar{d}$.
For a prescribed $t_c>0$, choose gains satisfying
\begin{align}
    k >& \frac{\lambda_{\max}(\Xi)}{a(\hat{\mathcal{L}})}\left(1+\frac{1}{c}\right),~~
    c \geq 2,~k_v\geq0,~\eta\geq\bar{d}.
    \label{eq:consensus_gains}
\end{align}
With $\mathbf{S}$ and $\gamma(t)$ defined by \eqref{eq:sliding_variable} and \eqref{eq:time_scaling}, apply the switching protocol
\begin{equation}\label{eq:control}
    \mathbf{u}(t)=-\mathbf{B}^{-1}
    \begin{cases}
       \mathbf{F}+k\gamma(t)\mathbf{S}
       +\eta\sign(\mathcal{L}^\top\Xi\mathbf{S}); & \zeta(t)=1,\\
       \mathbf{F}+k_v\mathbf{v}+\eta\sign(\mathbf{v}); & \zeta(t)=0.
    \end{cases}
\end{equation}
Then the Filippov solutions satisfy \eqref{eq:consensus} for all finite initial states and admissible attack histories, with $t_c^F$ defined by \eqref{eq:realistic_settling_time}.
The attack-free consensus correction $k\gamma(t)\mathbf{S}(t)$ remains bounded as $t\to(t_c^F)^-$ and converges to zero when $c>2$.
\end{theorem}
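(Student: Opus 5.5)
We propose to work with the Lyapunov function $V=\tfrac12\mathbf{S}^\top\Xi\mathbf{S}$ during attack-free intervals. During blackouts we propagate bounds on $(\mathbf{Z}_1,\mathbf{Z}_2)$ directly. The argument is then reassembled in the ``attack-free clock'' $\tau(t)=|\Upsilon_N(0,t)|$, in which $\gamma=c/(t_c-\tau)$ and the attacked intervals collapse to points.

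\textbf{Attack-free intervals.} Here $\zeta=1$ and $\dot\gamma=\gamma^2/c$. Using $\dot{\mathbf{Z}}_2=\mathcal{L}(\mathbf{F}+\mathbf{B}\mathbf{u}+\mathbf{d})$, we get
\[
\dot{\mathbf{S}}=\mathcal{L}\bigl(-k\gamma\mathbf{S}-\eta\,\sign(\mathcal{L}^\top\Xi\mathbf{S})+\mathbf{d}\bigr)+\gamma\mathbf{Z}_2+\tfrac{\gamma^2}{c}\mathbf{Z}_1 .
\]
Then $\dot V=\mathbf{S}^\top\Xi\dot{\mathbf{S}}$ has the following terms.
\begin{itemize}
\item The switching term gives $-\eta\|\mathcal{L}^\top\Xi\mathbf{S}\|_1+(\mathcal{L}^\top\Xi\mathbf{S})^\top\mathbf{d}\le0$, since $\eta\ge\bar d$. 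This holds in the Filippov sense.
\item The feedback term gives $-k\gamma\,\mathbf{S}^\top\hat{\mathcal{L}}\mathbf{S}$.
\item The drift term is $\gamma\mathbf{Z}_2+\frac{\gamma^2}{c}\mathbf{Z}_1=\gamma\mathbf{S}-\gamma^2(1-\frac1c)\mathbf{Z}_1$.
\end{itemize}
To handle the $\mathbf{Z}_1$ part, note that $\mathbf{1}^\top\mathcal{L}\ne0$ in general. So I would first project using $\boldsymbol{\xi}$: set $\tilde{\mathbf{S}}=(I-\mathbf{1}\boldsymbol{\xi}^\top)\mathbf{S}$. Because $\mathbf{1}\in\ker\hat{\mathcal{L}}$, we have $\mathbf{S}^\top\hat{\mathcal{L}}\mathbf{S}=\tilde{\mathbf{S}}^\top\hat{\mathcal{L}}\tilde{\mathbf{S}}\ge a(\hat{\mathcal{L}})\|\tilde{\mathbf{S}}\|^2$ by \Cref{lem:directed_graph}.

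It is cleaner to take $V=\tfrac12\mathbf{e}^\top\Xi\mathbf{e}$-type quantities on disagreement coordinates. Let $\mathbf{e}_x=(I-\mathbf{1}\boldsymbol{\xi}^\top)\mathbf{x}$ and $\mathbf{e}_v$ analogously, and $\boldsymbol{\sigma}=\mathbf{e}_v+\gamma\mathbf{e}_x$, so that $\mathbf{S}=\mathcal{L}\boldsymbol{\sigma}$. I would then use the Lyapunov function $W=\tfrac12\boldsymbol{\sigma}^\top\Xi\boldsymbol{\sigma}$ combined with $\boldsymbol{\sigma}^\top\hat{\mathcal{L}}$ estimates. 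The gain condition $k a(\hat{\mathcal{L}})>\lambda_{\max}(\Xi)(1+1/c)$ is exactly what makes the $-k\gamma$ dissipation dominate the $+\gamma(1+1/c)$ growth. The target inequality is
\[
\dot W\le-2\beta\gamma W+\text{(disturbance term absorbed by }\eta\text{)},\qquad \beta=\frac{k a(\hat{\mathcal{L}})}{\lambda_{\max}(\Xi)}-1-\frac1c>0 .
\]
Integrating $\gamma\,d\tau$ gives the factor $\bigl((t_c-\tau)/(t_c-\tau_0)\bigr)^{2\beta c}$.

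Next, $\mathbf{e}_x$ obeys $\dot{\mathbf{e}}_x=-\gamma\mathbf{e}_x+\boldsymbol{\sigma}$. Its solution carries the factor $\bigl((t_c-\tau)/(t_c-\tau_0)\bigr)^{c}$ plus a convolution with $\boldsymbol{\sigma}$. This gives $\mathbf{e}_x=O((t_c-\tau)^{\min(c,\beta c+\dots)})$, and hence $\gamma\mathbf{e}_x$ and $\mathbf{e}_v$ vanish. For boundedness of $k\gamma\mathbf{S}$ we need $\gamma\|\boldsymbol{\sigma}\|=O((t_c-\tau)^{\beta c-1})$ and $\gamma^2\mathbf{e}_x=O((t_c-\tau)^{c-2})$. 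The second bound is why $c\ge2$ suffices for boundedness and $c>2$ gives convergence to zero; one may need $\beta c\ge1$ or a sharper estimate here.

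\textbf{Blackouts.} Here $\gamma$ is frozen and $\dot v_i=-k_v v_i-\eta\,\sign(v_i)+d_i$. Every velocity is driven into a bounded set with $|v_i|$ nonincreasing whenever $|v_i|>0$, since $\eta\ge\bar d$. Hence $\|\mathbf{e}_v\|$ is bounded by $\|\mathbf{e}_v(t_m^s)\|+2\sqrt N\,\bar d(t_m^e-t_m^s)$, or simply by $\max_i|v_i(t_m^s)|$. Also $\|\mathbf{e}_x\|$ grows at most by the duration times that bound. This yields a jump estimate
\[
W(t_m^e)\le\alpha_m W(t_m^s)+\text{const}\cdot(\text{duration})^2
\]
with $\alpha_m$ depending on $\gamma(t_m^s)$ and the blackout length. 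By \Cref{assum:duration}, the number of blackouts and their total length before $t_c^F$ are finite, bounded via \eqref{eq:realistic_settling_bound}. So the product of jump factors and the accumulated additive terms are finite.

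\textbf{Main obstacle.} After the last attack before $t_c^F$, the attack-free decay factor drives everything to zero. The main obstacle is twofold. First, the additive disturbance/velocity terms accumulated during blackouts are not multiplicative. Second, the jump factor grows with $\gamma(t_m^s)$. I would first check whether only finitely many attacks can occur in $[0,t_c^F)$. Note that $t_c^F$ is reached during an attack-free interval, since $\tau$ must increase to reach $t_c$. So some final attack-free interval $(t_M^e,t_c^F)$ of positive length exists. On that interval the pure decay estimate gives the conclusion from whatever finite state exists at $t_M^e$. This reduces the problem to showing finiteness of the state at $t_M^e$, which follows from the bounded jumps and bounded growth over a finite horizon.
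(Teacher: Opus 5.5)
\textbf{Verdict: the proposal has a genuine gap in its core step.} The decay estimate on the final attack-free interval, which carries the whole theorem, is never established, and the route you sketch for it fails.

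\textbf{What is sound.} Only finitely many attacks start before $t_c^F$. The last stretch $[t_M^e,t_c^F)$ is attack-free with positive length. The state at $t_M^e$ is finite, because $\gamma$ is bounded on $[0,t_M^e]$ and the closed loop, after cancelling $\mathbf{F}$, is linear in the state plus bounded terms. This shortcut is lighter than the paper's recursion through $\Lambda_m$. It also makes your $W$-based blackout jump estimate unnecessary, which is fortunate, since $W$ alone does not control $\mathbf{e}_x$ and $\mathbf{e}_v$ separately.

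\textbf{Why your decay argument fails.} There are two separate problems.

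First, once you replace $\tfrac12\mathbf{S}^\top\Xi\mathbf{S}$ by $W=\tfrac12\boldsymbol{\sigma}^\top\Xi\boldsymbol{\sigma}$, the disturbance is no longer rejected. Because $\boldsymbol{\xi}^\top\boldsymbol{\sigma}=0$, the discontinuous part of $\dot W$ is $\boldsymbol{\sigma}^\top\Xi(\mathbf{d}-\eta\,\sign(\mathcal{L}^\top\Xi\mathbf{S}))$. In general $\Xi\boldsymbol{\sigma}$ has a different sign pattern from $\mathcal{L}^\top\Xi\mathbf{S}=\mathcal{L}^\top\Xi\mathcal{L}\boldsymbol{\sigma}$, so this term is indefinite even with $\eta\geq\bar d$. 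The switching term is matched only to an $\mathbf{S}$-based function, where it gives $(\bar d-\eta)\|\mathcal{L}^\top\Xi\mathbf{S}\|_1\leq0$, exactly as in your first bullet.

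Second, the drift leaves the cross term $-\gamma^2(1-\tfrac1c)\boldsymbol{\sigma}^\top\Xi\mathbf{e}_x$. In $\mathbf{S}$-coordinates this is $-\mu\,\mathbf{S}^\top\Xi\mathbf{Z}_1$ with $\mu=(1-\tfrac1c)\gamma^2$. It is indefinite and cannot be dominated by $\gamma W$. The coupling is two-way: $\dot{\boldsymbol{\sigma}}$ contains $\mathbf{e}_x$ with gain $\gamma^2(1-\tfrac1c)$, while $\dot{\mathbf{e}}_x=-\gamma\mathbf{e}_x+\boldsymbol{\sigma}$. So your cascade (first $W$ decays, then $\mathbf{e}_x$ follows by variation of constants) is not available, and $\dot W\leq-2\beta\gamma W$ does not follow from anything you wrote.

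\textbf{The missing idea.} The paper uses the composite function $V=\tfrac12\mathbf{S}^\top\Xi\mathbf{S}+\tfrac12\mu\,\mathbf{Z}_1^\top\Xi\mathbf{Z}_1$ with $\mu=\gamma^2-\dot\gamma$. Since $\dot{\mathbf{S}}=-\gamma(k\mathcal{L}-\mathbf{I}_N)\mathbf{S}-\mu\mathbf{Z}_1+\mathcal{L}(\mathbf{d}-\eta\,\sign(\mathcal{L}^\top\Xi\mathbf{S}))$ and $\dot{\mathbf{Z}}_1=\mathbf{S}-\gamma\mathbf{Z}_1$, the cross terms $\mp\mu\,\mathbf{S}^\top\Xi\mathbf{Z}_1$ cancel exactly. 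Note also that $\boldsymbol{\xi}^\top\mathbf{S}=0$ automatically, so your projection $\tilde{\mathbf{S}}$ is trivial. Using $\dot\mu=2\mu\gamma/c$, this gives
$\dot V\leq-\gamma\hat k\,\mathbf{S}^\top\Xi\mathbf{S}-\tfrac{c-1}{c}\gamma\mu\,\mathbf{Z}_1^\top\Xi\mathbf{Z}_1\leq-\bar c\,V/(t_c^{m+1}-t)$,
where $\hat k=k\,a(\hat{\mathcal{L}})/\lambda_{\max}(\Xi)-1$ and $\bar c=2c\min\{\hat k,(c-1)/c\}$.

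\textbf{Your reading of the gain condition.} The $1/c$ in the gain condition is not a growth rate to be dominated. It is the margin ensuring $2c\hat k>2$, so that $\bar c\geq2$. With your $\beta=\hat k-1/c$, the rate $2\beta c=2c\hat k-2$ can be arbitrarily close to zero. That is precisely why you could not close the boundedness of $k\gamma\mathbf{S}$.

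\textbf{How the conclusion then follows.} With $\bar c\geq2$ and $r=t_c^F-t$, the final-interval bound $V\leq\Lambda r^{\bar c}$ gives $\|\mathbf{S}\|=O(r^{\bar c/2})$, $\|\mathbf{Z}_1\|=O(r^{\bar c/2+1})$, $\|\mathbf{Z}_2\|=O(r^{\bar c/2})$, and $k\gamma\|\mathbf{S}\|=O(r^{\bar c/2-1})$. The last is bounded, and it vanishes when $c>2$, since then $2(c-1)>2$ and hence $\bar c>2$.
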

\begin{proof}
All differential inequalities below hold almost everywhere along Filippov solutions. On an attack-free interval $[t_m^e,t_{m+1}^s)\cap[0,t_c^F)$, define $t_c^{m+1}:=t_c+|\Upsilon_A(0,t_m^e)|$, with $t_0^e=0$. The scaling law \eqref{eq:time_scaling} satisfies $\gamma(t)=c/(t_c^{m+1}-t)$ on this interval. Differentiating \eqref{eq:sliding_variable} and substituting \eqref{eq:control} yields
\begin{align}\label{eq:S_dot}
    \dot{\mathbf{S}} =& -\gamma(k\mathcal{L}-\mathbf{I}_N)\mathbf{S}-\mu\mathbf{Z}_1+\mathcal{L}\left(\mathbf{d}-\eta\sign(\mathcal{L}^{\top}\Xi\mathbf{S})\right),
\end{align}
where $\mu:=\gamma^2-\dot{\gamma}=c(c-1)/(t_c^{m+1}-t)^2>0$ by $c\geq2$ in \eqref{eq:consensus_gains}. Consider the piecewise differentiable Lyapunov function
\begin{equation}\label{eq:consensus_lyapunov}
    V(t)=\frac{1}{2}\mathbf{S}^{\top}\Xi\mathbf{S}
    +\frac{1}{2}\mu(t)\mathbf{Z}_1^{\top}\Xi\mathbf{Z}_1.
\end{equation}
Using $\dot{\mathbf{Z}}_1=\mathbf{Z}_2=\mathbf{S}-\gamma\mathbf{Z}_1$ and \eqref{eq:S_dot}, the mixed terms in $\dot{V}$ cancel, leaving
\begin{align}\label{eq:Vdot_pre1}
    \dot{V} =& -\gamma\mathbf{S}^{\top}\Xi(k\mathcal{L}-\mathbf{I}_N)\mathbf{S}
    -\hat{\mu}\mathbf{Z}_1^{\top}\Xi\mathbf{Z}_1\nonumber\\
    &+\mathbf{S}^{\top}\Xi\mathcal{L}\left(\mathbf{d}-\eta\sign(\mathcal{L}^{\top}\Xi\mathbf{S})\right),
\end{align}
where $\hat{\mu}:=\mu\gamma-\dot{\mu}/2=(c-1)\mu\gamma/c>0$. H\"older's inequality and the uniform disturbance bound imply
$\mathbf{S}^{\top}\Xi\mathcal{L}(\mathbf{d}-\eta\sign(\mathcal{L}^{\top}\Xi\mathbf{S}))\leq(\bar{d}-\eta)\|\mathcal{L}^{\top}\Xi\mathbf{S}\|_1\leq0$ by \eqref{eq:consensus_gains}.
Moreover, $\boldsymbol{\xi}^{\top}\mathbf{S}=0$ because $\mathbf{Z}_1=\mathcal{L}\mathbf{x}$ and $\mathbf{Z}_2=\mathcal{L}\mathbf{v}$. Applying \Cref{lem:directed_graph} to \eqref{eq:Vdot_pre1} yields
\begin{equation}\label{eq:V_graph_decay}
    \dot{V}\leq-\gamma\hat{k}\mathbf{S}^{\top}\Xi\mathbf{S}
    -\frac{c-1}{c}\mu\gamma\mathbf{Z}_1^{\top}\Xi\mathbf{Z}_1,
\end{equation}
where $\hat{k}:=[k a(\hat{\mathcal{L}})-\lambda_{\max}(\Xi)]/\lambda_{\max}(\Xi)>1/c$ by \eqref{eq:consensus_gains}.
Set $\bar{c}:=2c\min\{\hat{k},(c-1)/c\}\geq2$. From \eqref{eq:consensus_lyapunov} and \eqref{eq:V_graph_decay}, one obtains $\dot{V}\leq-\bar{c}V/(t_c^{m+1}-t)$. Integration over the attack-free interval establishes
\begin{equation}\label{eq:V_healthy1}
    V(t)\leq\left(\frac{t_c^{m+1}-t}{t_c^{m+1}-t_m^e}\right)^{\bar{c}}V(t_m^e).
\end{equation}
During an attack interval $[t_m^s,t_m^e)\cap[0,t_c^F)$, \eqref{eq:time_scaling} makes $\gamma$ constant, so $\mu=\gamma^2$ in \eqref{eq:consensus_lyapunov}. Under the blackout branch of \eqref{eq:control}, the sliding variable satisfies $\dot{\mathbf{S}}=(\gamma-k_v)\mathbf{Z}_2+\mathcal{L}(\mathbf{d}-\eta\sign(\mathbf{v}))$. Differentiating \eqref{eq:consensus_lyapunov} now yields
\begin{align}\label{eq:Vdot_pre2}
    \dot{V} =& (\gamma-k_v)\mathbf{S}^{\top}\Xi\mathbf{S}
    +k_v\gamma\mathbf{S}^{\top}\Xi\mathbf{Z}_1\nonumber\\
    &-\gamma^3\mathbf{Z}_1^{\top}\Xi\mathbf{Z}_1
    +\mathbf{S}^{\top}\Xi\mathcal{L}(\mathbf{d}-\eta\sign(\mathbf{v})).
\end{align}
Young's inequality bounds the mixed term by
$k_v\gamma\mathbf{S}^{\top}\Xi\mathbf{Z}_1\leq(k_v/2)\mathbf{S}^{\top}\Xi\mathbf{S}+(k_v\gamma^2/2)\mathbf{Z}_1^{\top}\Xi\mathbf{Z}_1$.
For $\boldsymbol{\Phi}:=\Xi\mathcal{L}(\mathbf{d}-\eta\sign(\mathbf{v}))$, the disturbance bound implies $\|\boldsymbol{\Phi}\|_2\leq\bar{\omega}$, where $\bar{\omega}:=\|\mathcal{L}\|_2\lambda_{\max}(\Xi)\sqrt{N}(\bar{d}+\eta)$.
Applying $\mathbf{S}^{\top}\boldsymbol{\Phi}\leq(\|\mathbf{S}\|_2^2+\|\boldsymbol{\Phi}\|_2^2)/2$ to \eqref{eq:Vdot_pre2} and using \eqref{eq:consensus_lyapunov} establishes
\begin{equation}\label{eq:V_attack_growth}
    \dot{V}\leq\theta(t)V+\hat{\theta},
\end{equation}
with $\theta(t):=2|\gamma-k_v/2|+(2/\gamma^2)\max\{0, (k_v^2/2)-\gamma^3\}+1/\lambda_{\min}(\Xi)$ and $\hat{\theta}:=\bar{\omega}^2/2$.

The deadline bound \eqref{eq:realistic_settling_bound} and the attack-frequency bound in \Cref{assum:duration} ensure that the number $M$ of attacks starting before $t_c^F$ is finite, with $M\leq\mathcal{N}_A(0,t_c^F)\leq t_c^F/\nu_3+\nu_4$.
By the first-attainment definition \eqref{eq:realistic_settling_time}, every such attack ends before $t_c^F$: the remaining attack-free duration is positive at the attack onset and remains constant throughout the blackout.
Hence $\gamma(t_m^s)$ is finite for $m=1,\ldots,M$, and a finite constant $\bar{\theta}>0$ bounds $\theta$ on all these attack intervals. Take $\bar{\theta}:=\max_{1\leq m\leq M}\theta(t_m^s)$ when $M\geq1$, and $\bar{\theta}:=1$ when $M=0$.
Integration of \eqref{eq:V_attack_growth} over the $m$-th attack leads to the endpoint bound
\begin{equation}\label{eq:V_attack_endpoint}
    V(t_m^{e-})\leq e^{\bar{\theta}(t_m^e-t_m^s)}V(t_m^s)+\Delta_m,
\end{equation}
where $\Delta_m:=(\hat{\theta}/\bar{\theta})[e^{\bar{\theta}(t_m^e-t_m^s)}-1]\geq0$.

At switching instants before $t_c^F$, the plant states $\mathbf{x}$ and $\mathbf{v}$ and the scaling function $\gamma$ are continuous, so \eqref{eq:sliding_variable} ensures continuity of $\mathbf{S}$.
The coefficient $\mu$ equals $(c-1)\gamma^2/c$ on attack-free intervals and $\gamma^2$ during attacks. Across an attack onset, $\mu$ increases by at most the factor $c/(c-1)$; at an attack end, $\mu$ cannot increase.
Thus \eqref{eq:consensus_lyapunov} implies $V(t_m^s)\leq\alpha V(t_m^{s-})$ and $V(t_m^e)\leq V(t_m^{e-})\leq\alpha V(t_m^{e-})$, where $\alpha:=c/(c-1)$. If the first attack starts at $t=0$, take $V(t_1^{s-}):=V(0)$ in the initial switching estimate.
Applying the bounds on $V(t_m^s)$ and $V(t_m^e)$ to \eqref{eq:V_attack_endpoint} and using \eqref{eq:V_healthy1} yields
\begin{align}\label{eq:V_healthy2}
    V(t)\leq&\left(\frac{t_c^{m+1}-t}{t_c^{m+1}-t_m^e}\right)^{\bar{c}}\left[\alpha^2 e^{\bar{\theta}(t_m^e-t_m^s)}V(t_m^{s-})+\alpha\Delta_m\right].
\end{align}
Since $t-|\Upsilon_A(0,t)|$ is constant during $[t_m^s,t_m^e)$, the remaining attack-free duration has the same value at attack onset and recovery:
\begin{equation}\label{eq:clock_endpoint_identity}
    t_c^{m+1}-t_m^e=t_c^m-t_m^s>0.
\end{equation}
To propagate \eqref{eq:V_healthy2} through successive intervals, set $\Lambda_0:=V(0)/t_c^{\bar{c}}$ and define
\begin{equation}\label{eq:V_healthy3}
    \Lambda_m:=\alpha^2 e^{\bar{\theta}(t_m^e-t_m^s)}\Lambda_{m-1}
    +\frac{\alpha\Delta_m}{(t_c^{m+1}-t_m^e)^{\bar{c}}}.
\end{equation}
Suppose $V(t)\leq\Lambda_{m-1}(t_c^m-t)^{\bar{c}}$ for $t\in[t_{m-1}^e,t_m^s)$.
Taking the limit as $t\to(t_m^s)^-$ yields $V(t_m^{s-})\leq\Lambda_{m-1}(t_c^m-t_m^s)^{\bar{c}}$.
Substituting the bound on $V(t_m^{s-})$ into \eqref{eq:V_healthy2} and applying \eqref{eq:clock_endpoint_identity} establishes $V(t)\leq\Lambda_m(t_c^{m+1}-t)^{\bar{c}}$ on $[t_m^e,t_{m+1}^s)\cap[0,t_c^F)$, with $\Lambda_m$ defined by \eqref{eq:V_healthy3}.
Equation \eqref{eq:V_healthy1} with $m=0$ establishes $V(t)\leq\Lambda_0(t_c-t)^{\bar{c}}$ on the initial healthy interval. Induction using \eqref{eq:V_healthy3} extends the bound on $V(t)$ to every healthy interval before $t_c^F$.
Expanding the recursion \eqref{eq:V_healthy3} expresses $\Lambda_m$ in terms of the attack history:
\begin{align}\label{eq:V_healthy4}
    \Lambda_m=&\frac{\alpha^{2m}e^{\bar{\theta}\sum_{r=1}^{m}(t_r^e-t_r^s)}}{t_c^{\bar{c}}}V(0)\nonumber\\
    &+\sum_{q=1}^{m}\frac{\alpha^{2(m-q)+1}e^{\bar{\theta}\sum_{r=q+1}^{m}(t_r^e-t_r^s)}}{(t_c^{q+1}-t_q^e)^{\bar{c}}}\Delta_q.
\end{align}
Since $t_c>0$ and \eqref{eq:clock_endpoint_identity} ensures $t_c^{q+1}-t_q^e>0$, the finite sum in \eqref{eq:V_healthy4} defines finite coefficients $\Lambda_m$ for $m=0,\ldots,M$.
With $\Lambda:=\max_{0\leq m\leq M}\Lambda_m<\infty$, the healthy-interval estimates become
\begin{equation}\label{eq:V_healthy5}
    V(t)\leq\Lambda(t_c^{m+1}-t)^{\bar{c}}.
\end{equation}
The constant $\Lambda$ may depend on the initial state and attack history; the deadline bound \eqref{eq:realistic_settling_bound} is independent of the initial state.

There is a final healthy interval $[t_M^e,t_c^F)$, with $t_M^e=0$ if $M=0$, and \eqref{eq:realistic_settling_time} implies $t_c^{M+1}=t_c^F$.
Set $r(t):=t_c^F-t$ and $C:=\sqrt{2\Lambda/\lambda_{\min}(\Xi)}$.
Since $\mu=c(c-1)/r^2$ on this interval, the two nonnegative terms in \eqref{eq:consensus_lyapunov}, together with \eqref{eq:V_healthy5}, establish
\begin{align}\label{eq:terminal_disagreement_rates}
    \|\mathbf{S}(t)\|_2\leq& C r(t)^{\bar{c}/2},~
    \|\mathbf{Z}_1(t)\|_2\leq\frac{C}{\sqrt{c(c-1)}}r(t)^{\bar{c}/2+1}.
\end{align}
Using $\mathbf{Z}_2=\mathbf{S}-\gamma\mathbf{Z}_1$ from \eqref{eq:sliding_variable} and $\gamma=c/r$, \eqref{eq:terminal_disagreement_rates} yields
\begin{equation}\label{eq:terminal_velocity_rate}
    \|\mathbf{Z}_2(t)\|_2\leq C\left(1+\sqrt{\frac{c}{c-1}}\right)r(t)^{\bar{c}/2}.
\end{equation}
Both $\mathbf{Z}_1$ and $\mathbf{Z}_2$ tend to zero as $t\to(t_c^F)^-$.
Because $\ker(\mathcal{L})=\operatorname{span}\{\mathbf{1}_N\}$ by \Cref{lem:directed_graph}, convergence of $\mathbf{Z}_1$ and $\mathbf{Z}_2$ to zero establishes the pairwise position and velocity consensus limits \eqref{eq:consensus}.
Finally, \eqref{eq:terminal_disagreement_rates} bounds the attack-free consensus correction by
\begin{equation}\label{eq:terminal_correction_bound}
    k\gamma(t)\|\mathbf{S}(t)\|_2\leq ckC r(t)^{(\bar{c}-2)/2}.
\end{equation}
The correction remains bounded since $\bar{c}\geq2$.
When $c>2$, the strict gain condition \eqref{eq:consensus_gains} ensures $\hat{k}>1/c$ and $(c-1)/c>1/c$, so $\bar{c}>2$. The upper bound in \eqref{eq:terminal_correction_bound} then vanishes as $t\to(t_c^F)^-$, establishing $k\gamma(t)\mathbf{S}(t)\to\mathbf{0}_N$. As $t\to t_F^c$, $V(t)\to0$, by continuity of $V$ and since $\dot V\leq 0$, the consensus manifold remains invariant for all $t\ge t_F^{c-}$.
\end{proof}
In the absence of DoS attacks, \eqref{eq:realistic_settling_time} reduces to $t_c^F=t_c$. Hence \Cref{thm:consensus} establishes the position and velocity agreement limits \eqref{eq:consensus} at  $t_c$, despite bounded exogenous disturbances.
\begin{remark}
For a connected undirected graph, the normalization in \Cref{lem:directed_graph} yields $\boldsymbol{\xi}=\mathbf{1}_N/N$, $\Xi=\mathbf{I}_N/N$, and $\hat{\mathcal{L}}=\mathcal{L}/N$.
Hence $a(\hat{\mathcal{L}})=\lambda_2(\mathcal{L})/N$, and \eqref{eq:consensus_gains} reduces to $k>(1+1/c)/\lambda_2(\mathcal{L})$, with $c\geq2$, $k_v\geq0$, and $\eta\geq\bar{d}$ unchanged.
The discontinuous consensus term simplifies to $\eta\,\sign(\mathcal{L}\mathbf{S})$ inside \eqref{eq:control}.
\end{remark}
During communication blackouts, the local branch of \eqref{eq:control} regulates each agent's velocity through damping and disturbance compensation. Substitution into \eqref{eq:agent_dynamics} results in $\dot{v}_i=-k_v v_i-\eta\sign(v_i)+d_i$.
Under \eqref{eq:consensus_gains}, the velocity satisfies $v_i\dot{v}_i\leq-k_v v_i^2$ almost everywhere, so the speed of each agent remains nonincreasing throughout a blackout. The scaling function $\gamma$ remains fixed during each outage by \eqref{eq:time_scaling}; relative-position feedback and the evolution of $\gamma$ resume when communication is restored. The design of analogous local blackout feedback for higher-order agents remains a direction for future work.
\begin{remark}
The estimate \eqref{eq:terminal_correction_bound} ensures that the consensus correction $k\gamma(t)\mathbf{S}(t)$ remains bounded as $t\to(t_c^F)^-$ and vanishes when $c>2$. During attack-free intervals, the full control input in \eqref{eq:control} satisfies $    \|\mathbf{u}(t)\|_2\leq\|\mathbf{B}^{-1}\|_2
    \left(\|\mathbf{F}\|_2+k\gamma(t)\|\mathbf{S}(t)\|_2+\eta\sqrt{N}\right).$ During an attack interval, the local feedback branch instead implies $\|\mathbf{u}(t)\|_2\leq\|\mathbf{B}^{-1}\|_2\left(\|\mathbf{F}\|_2+k_v\|\mathbf{v}(t)\|_2+\eta\sqrt{N}\right)$.
Thus the full input is bounded on $[0,t_c^F)$ if $\mathbf{F}$, $\mathbf{B}^{-1}$, and the blackout velocities remain bounded.
\end{remark}
Implementing \eqref{eq:control} with finite actuation capacity requires choosing $t_c$ so that the control demand remains within the available input range. Although the convergence guarantee in \Cref{thm:consensus} holds for all finite initial states, the required control magnitude depends on the initial conditions. Actuator limits consequently restrict the range of implementable deadlines, even in the absence of DoS attacks.

\section{Simulations}
\begin{figure}[ht!]
    \centering
    \centering
        \resizebox{\linewidth}{!}{%
        \begin{tikzpicture}[
    >=Latex,
    vertex/.style={
        circle,
        draw=black,
        thick,
        minimum size=8mm,
        inner sep=0pt,
        font=\small\bfseries
    },
    edge/.style={
        ->,
        thick,
        shorten >=2pt,
        shorten <=2pt
    },
    weight/.style={
        fill=none,
        inner sep=1pt,
        font=\scriptsize
    }
]

\node[vertex, fill=purple!20] (v5) at (0,0)   {5};
\node[vertex, fill=blue!20]   (v1) at (2.2,0) {1};
\node[vertex, fill=red!20]    (v2) at (4.4,0) {2};
\node[vertex, fill=green!20]  (v3) at (6.6,0) {3};
\node[vertex, fill=orange!25] (v4) at (8.8,0) {4};

\draw[edge] (v5) to[bend left=35] node[weight, above] {$0.7$} (v1);
\draw[edge] (v1) -- node[weight, above] {$1.2$} (v2);
\draw[edge] (v2) -- node[weight, above] {$0.9$} (v3);
\draw[edge] (v3) -- node[weight, above] {$1.4$} (v4);

\draw[edge] (v1) to[bend left=35] node[weight, above] {$0.4$} (v3);
\draw[edge] (v2) to[bend left=35] node[weight, below] {$0.5$} (v5);
\draw[edge] (v4) to[bend left=35] node[weight, below] {$0.8$} (v5);

\end{tikzpicture}%
        }

        \caption{Interaction Topology ($\mathcal{G}$).}
        \label{fig:strongly_connected1}
\end{figure}    
To demonstrate the efficacy of the proposed strategy, we consider $5$ agents interacting over a weighted, strongly connected digraph $\mathcal{G}$. Each agent is modeled as an ideal single-link robotic manipulator with $f_i=-\frac{g}{l}\sin x_i-\frac{\bar\nu}{ml^2}v_i$ and $b_i=\frac{1}{ml^2}$, where $x_i$ denotes the angular position, $m=2~\mathrm{kg}$ is the link mass, $l=1~\mathrm{m}$ is the link length, $\bar\nu=6~\mathrm{Nms}$ is the friction coefficient, $\tau_i$ is the applied torque (control input), and $g=9.81~\mathrm{m/s^2}$ is the gravitational acceleration. The external disturbance is modeled as $d_i=0.5i\sin((0.5+0.1i)t)+0.05i\cos((1+0.2i)t)$, $i=1,\dots,5$.The initial conditions are set to $x(0) = [-2.5,\ 1.25,\ 3.75,\ -1.25,\ 5.0]^\top$ and $v(0) = [1.0,\ -0.6,\ 0.4,\ 0.0,\ -0.8]^\top$. The controller gains are set to $c=3$, $k=5$, and $\eta=3.5$, with convergence time $t_c=5~\mathrm{s}$. The discontinuous $\operatorname{sign}(\cdot)$ in \eqref{eq:control} is replaced by $\tanh(\cdot/\varepsilon)$, where $\varepsilon>0$ is small, to mitigate numerical chattering. We first consider the case with no DoS attacks. The corresponding results are illustrated in \Cref{fig:case_1}. One can observe from \Cref{fig:s1} that $S$ converges to zero within $t=0.5~\mathrm{s}$ and remains there despite the disturbances. On the sliding manifold, $\mathbf{Z}_2=\dot{\mathbf{Z}}_1\approx-\gamma(t)\mathbf{Z}_1$, which guarantees consensus of $x_i$ and $v_i$ as $t\to t_c^-$, and consensus is preserved thereafter, as can be observed from \Cref{fig:x1} and \Cref{fig:v1}. The corresponding control input  (shown in \Cref{fig:s1}), remains bounded throughout.
\begin{figure*}[h!]
    \centering
    \begin{subfigure}[t]{0.32\linewidth}
        \centering
        \includegraphics[width=\linewidth]{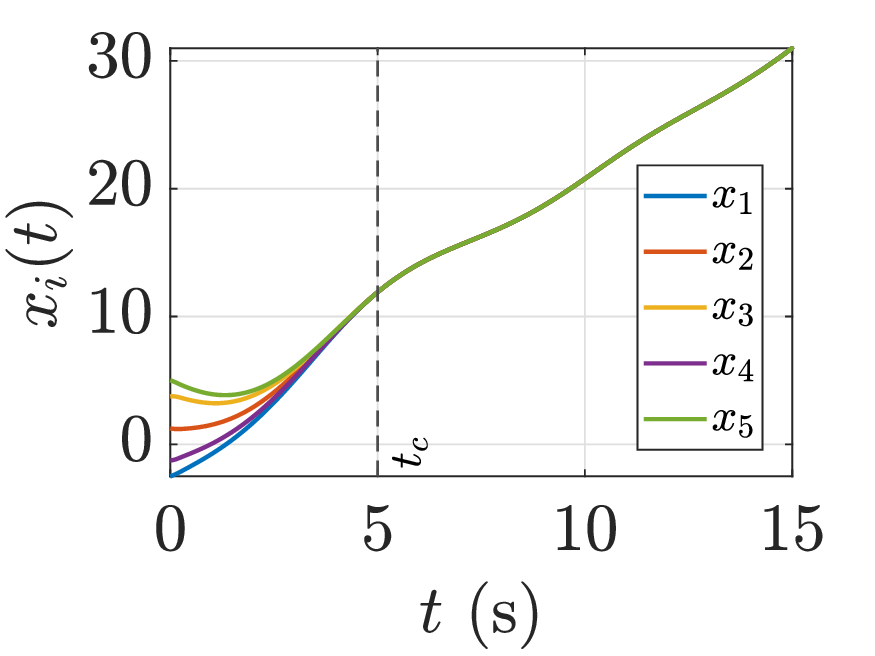}
        \caption{$x_i(t)$.}
        \label{fig:x1}
    \end{subfigure}
    \begin{subfigure}[t]{0.32\linewidth}
        \centering
        \includegraphics[width=\linewidth]{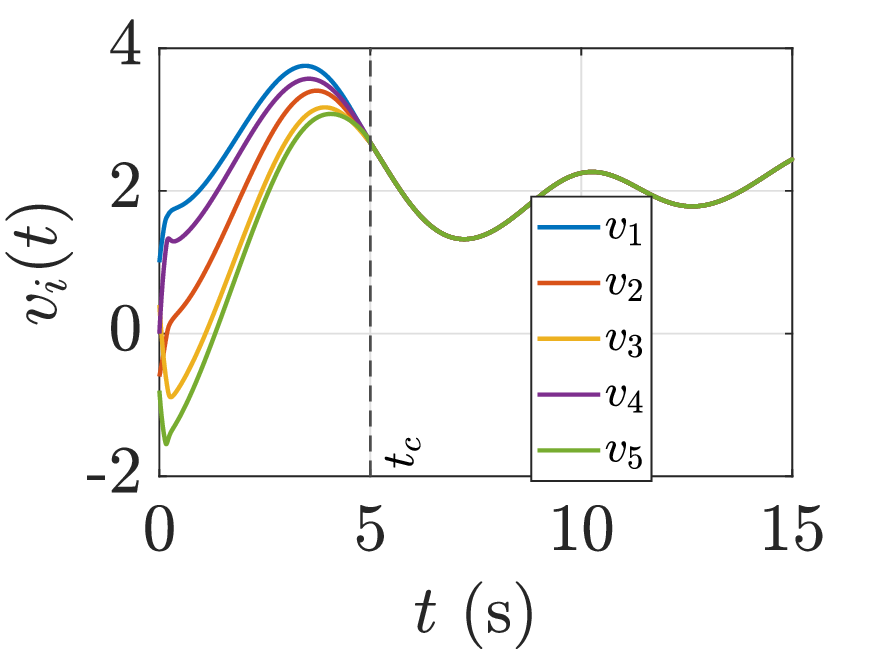}
        \caption{$v_i(t)$.}
        \label{fig:v1}
    \end{subfigure}
    \begin{subfigure}[t]{0.32\linewidth}
        \centering
        \includegraphics[width=\linewidth]{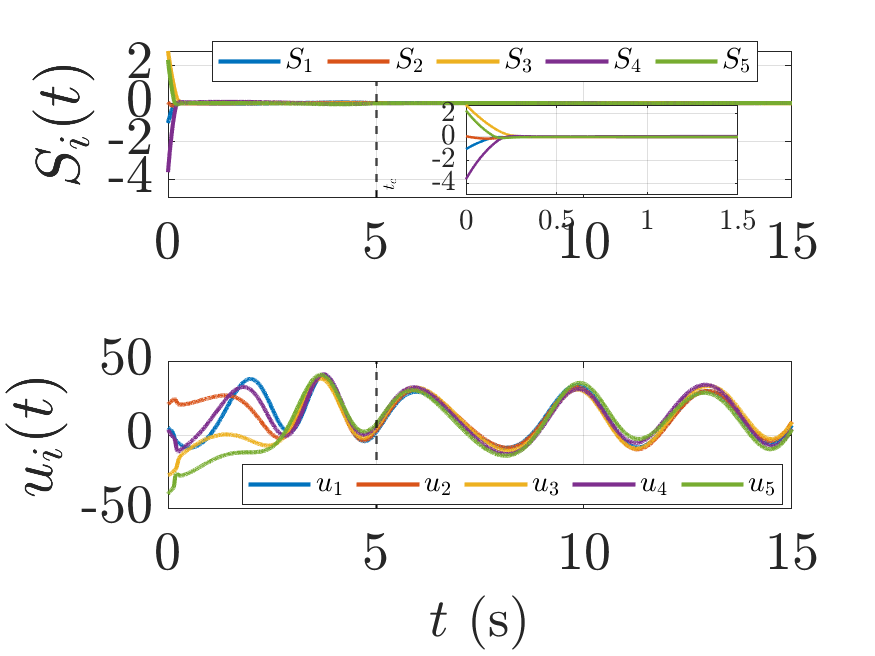}
        \caption{$S(t)$ and $u_i(t)$}
        \label{fig:s1}
    \end{subfigure}
    \caption{Prescribed time consensus under no DoS attacks.}
    \label{fig:case_1}
\end{figure*}

For the next case, we consider the system under DoS attacks in the absence of disturbances, with the gains $k_v$ and $\eta$ set to zero. The corresponding results are presented in \Cref{fig:case_2}, where the DoS attack intervals are indicated by the gray bands. In \Cref{fig:u2}, the quantity $t_{\rm go}(t)=t_c+|\Upsilon_A(0,t)|-t$ is depicted, indicating a realistic settling time of $t_c^F=6.5,\mathrm{s}$, at which $t_{\rm go}(t)$ reaches zero. As shown in \Cref{fig:v2}, $\dot{v}_i=0$ during the DoS intervals. Nevertheless, consensus in both velocity and position is achieved by $t_c^F$, as illustrated in \Cref{fig:v2} and \Cref{fig:x2} respectively. Furthermore, one can note that the consensus condition remains invariant under subsequent DoS attacks. Corresponding control inputs are shown in \Cref{fig:u2}. It can also be observed that abrupt changes occur at the transitions between DoS and healthy communication intervals; a smooth transition mechanism will be considered in future work.
 \begin{figure*}[h!]
    \centering
    \begin{subfigure}[t]{0.32\linewidth}
        \centering
        \includegraphics[width=\linewidth]{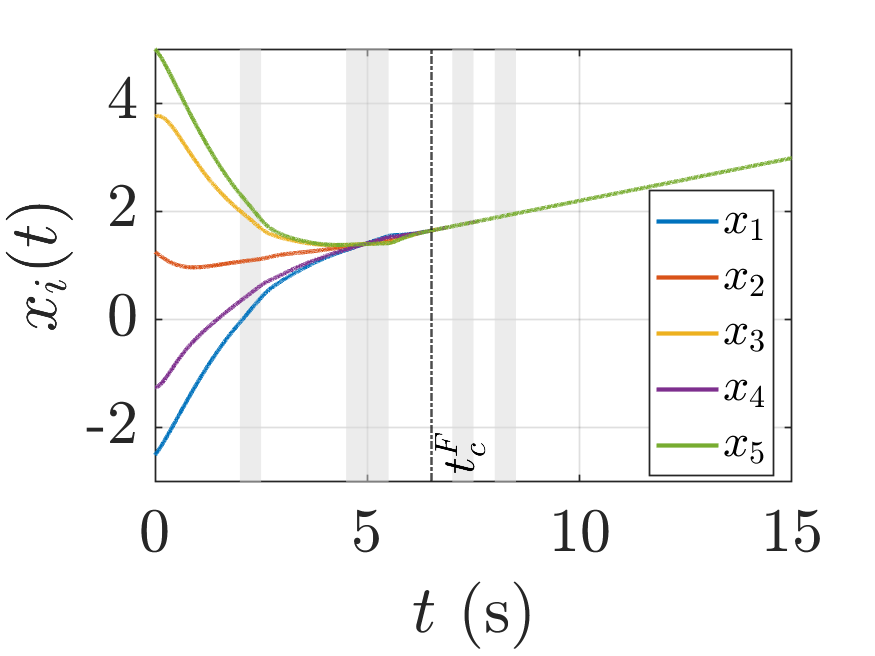}
        \caption{$x_i(t)$.}
        \label{fig:x2}
    \end{subfigure}
    \begin{subfigure}[t]{0.32\linewidth}
        \centering
        \includegraphics[width=\linewidth]{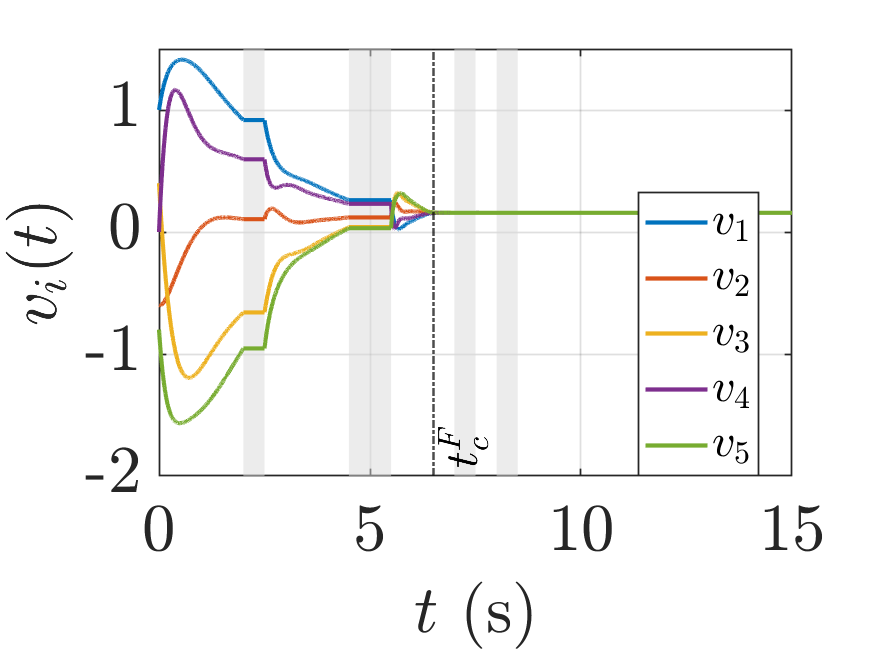}
        \caption{$v_i(t)$.}
        \label{fig:v2}
    \end{subfigure}
    \begin{subfigure}[t]{0.32\linewidth}
        \centering
        \includegraphics[width=\linewidth]{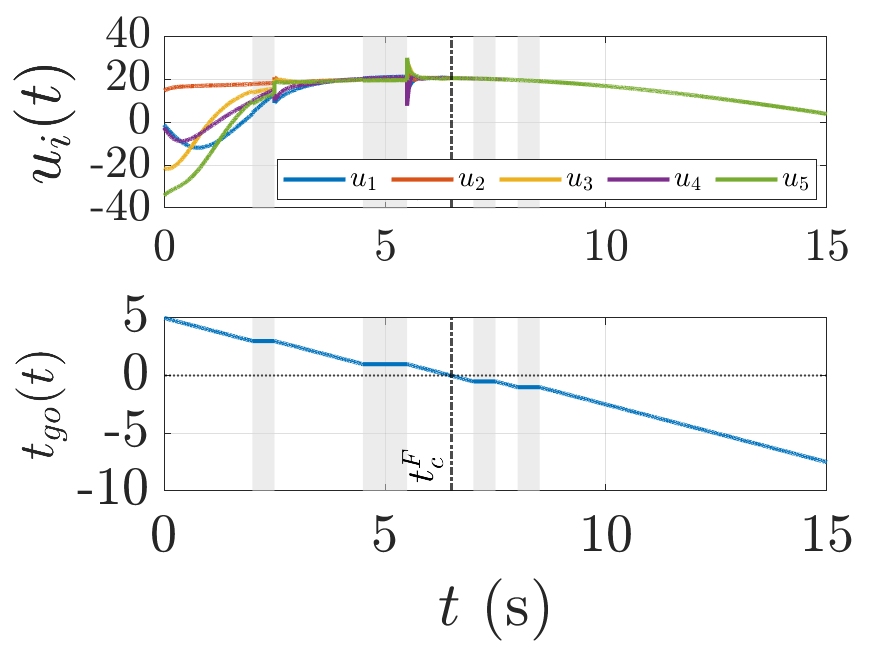}
        \caption{$u_i(t)$ and $t_{\rm go}(t)$}
        \label{fig:u2}
    \end{subfigure}
    \caption{Consensus under DoS attacks with no disturbances.}
    \label{fig:case_2}
\end{figure*}
 \begin{figure*}[h!]
    \centering
    \begin{subfigure}[t]{0.32\linewidth}
        \centering
        \includegraphics[width=\linewidth]{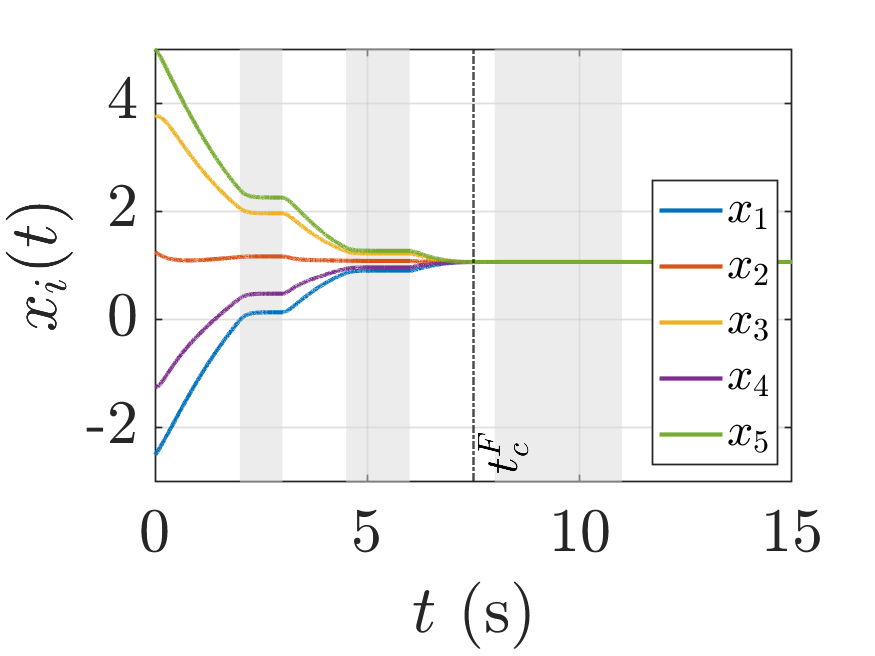}
        \caption{$x_i(t)$.}
        \label{fig:x3}
    \end{subfigure}
    \begin{subfigure}[t]{0.32\linewidth}
        \centering
        \includegraphics[width=\linewidth]{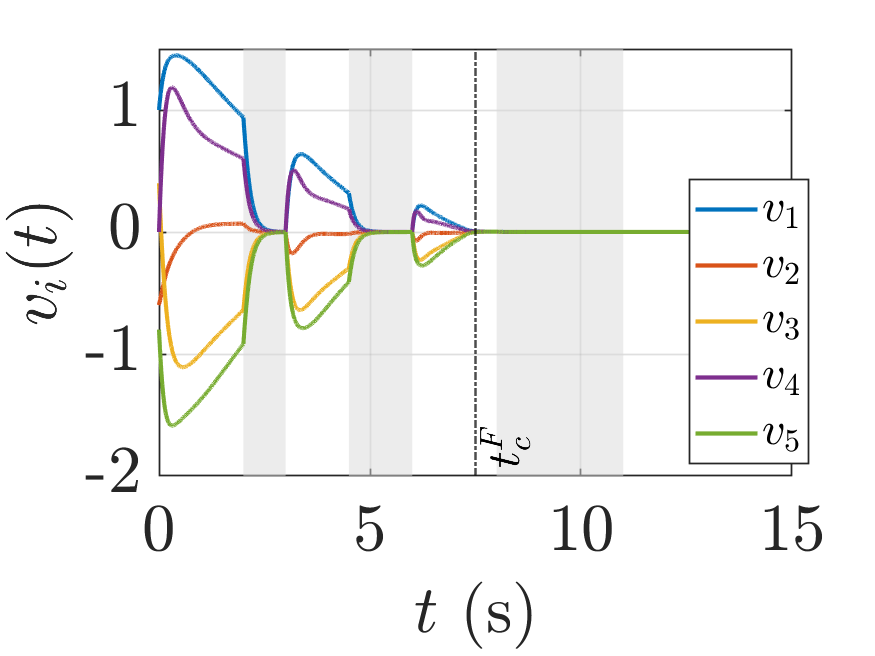}
        \caption{$v_i(t)$.}
        \label{fig:v3}
    \end{subfigure}
    \begin{subfigure}[t]{0.32\linewidth}
        \centering
        \includegraphics[width=\linewidth]{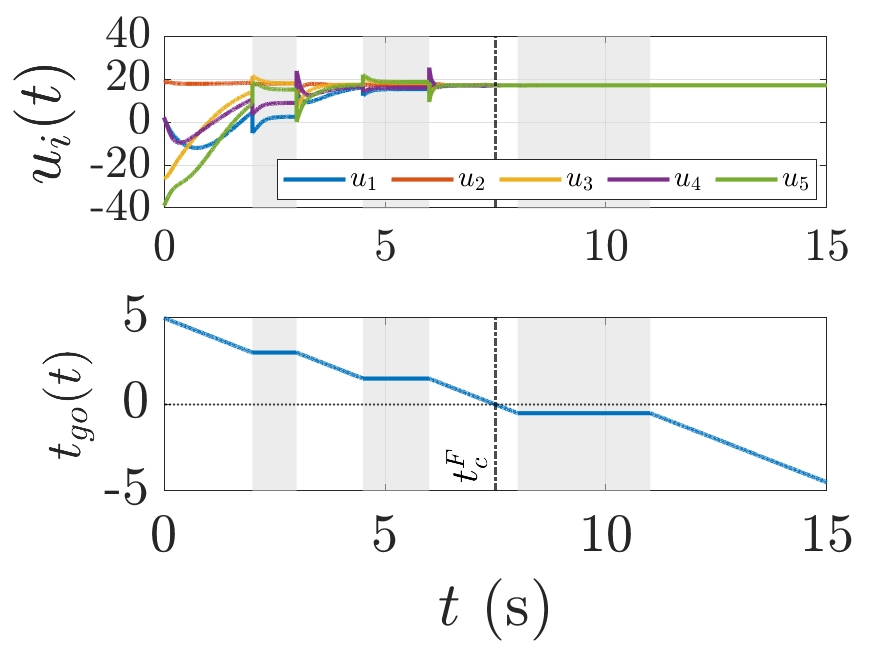}
        \caption{$u_i(t)$ and $t_{\rm go}(t)$}
        \label{fig:u3}
    \end{subfigure}
    \caption{Consensus under DoS attacks with disturbances.}
    \label{fig:case_3}
\end{figure*}

Next, we consider the system under DoS attacks in the presence of disturbances, with $k_v=3$ and $\eta=3.5$. A different DoS attack sequence is considered, and the corresponding results are presented in \Cref{fig:case_3}. From $t_{\rm go}(t)$ in \Cref{fig:u3}, one can note that $t_c^F=7.5~\rm s$. During the DoS intervals, $v_i$ decreases toward zero, while velocity and position consensus are achieved by $t_c^F$, as shown in \Cref{fig:v3} and \Cref{fig:x3}, respectively. The corresponding control inputs in \Cref{fig:u3} remain bounded, and the consensus remains invariant under subsequent DoS attacks and disturbances.

\section{Conclusions}\label{sec:conclusions}
We developed a switching sliding-mode protocol for predefined-time leaderless consensus of nonlinear second-order agents under bounded disturbances and DoS attacks. The proposed design uses mutual position and velocity disagreement over a strongly connected directed graph, avoids inversion of the singular graph Laplacian, and regulates motion through local velocity feedback during complete communication blackouts. Under bounds on attack duration and frequency, we established convergence by a realistic settling time that accounts for accumulated outages while retaining a prescribed attack-free duration independent of the initial conditions. We derived bounds that track disagreement amplification and disturbance accumulation during blackouts and quantify disagreement decay after communication recovery. The resulting gain conditions ensure position and velocity agreement as the realistic settling time is approached, while keeping the consensus correction bounded despite the singular time scaling. Future work will address the selection of feasible convergence durations under actuator limits and the extension of local blackout feedback to higher-order agents.

\bibliographystyle{IEEEtran}
\bibliography{references_resilient}

% Generated by IEEEtran.bst, version: 1.14 (2015/08/26)
\begin{thebibliography}{10}
\providecommand{\url}[1]{#1}
\csname url@samestyle\endcsname
\providecommand{\newblock}{\relax}
\providecommand{\bibinfo}[2]{#2}
\providecommand{\BIBentrySTDinterwordspacing}{\spaceskip=0pt\relax}
\providecommand{\BIBentryALTinterwordstretchfactor}{4}
\providecommand{\BIBentryALTinterwordspacing}{\spaceskip=\fontdimen2\font plus
\BIBentryALTinterwordstretchfactor\fontdimen3\font minus
  \fontdimen4\font\relax}
\providecommand{\BIBforeignlanguage}[2]{{%
\expandafter\ifx\csname l@#1\endcsname\relax
\typeout{** WARNING: IEEEtran.bst: No hyphenation pattern has been}%
\typeout{** loaded for the language `#1'. Using the pattern for}%
\typeout{** the default language instead.}%
\else
\language=\csname l@#1\endcsname
\fi
#2}}
\providecommand{\BIBdecl}{\relax}
\BIBdecl

\bibitem{9448334}
Y.~Bai, Y.~Wang, M.~Svinin, E.~Magid, and R.~Sun, ``Adaptive multi-agent
  coverage control with obstacle avoidance,'' \emph{IEEE Control Systems
  Letters}, vol.~6, pp. 944--949, 2022.

\bibitem{9924233}
A.~Sinha and Y.~Cao, ``3-d nonlinear guidance law for target
  circumnavigation,'' \emph{IEEE Control Systems Letters}, vol.~7, pp.
  655--660, 2023.

\bibitem{9274339}
A.~Sinha, S.~R. Kumar, and D.~Mukherjee, ``Cooperative salvo based active
  aircraft defense using impact time guidance,'' \emph{IEEE Control Systems
  Letters}, vol.~5, no.~5, pp. 1573--1578, 2021.

\bibitem{9000526}
A.~Sinha and S.~R. Kumar, ``Supertwisting control-based cooperative salvo
  guidance using leader–follower approach,'' \emph{IEEE Transactions on
  Aerospace and Electronic Systems}, vol.~56, no.~5, pp. 3556--3565, 2020.

\bibitem{yu2009second}
W.~Yu, G.~Chen, M.~Cao, and J.~Kurths, ``Second-order consensus for multiagent
  systems with directed topologies and nonlinear dynamics,'' \emph{IEEE
  Transactions on Systems, Man, and Cybernetics, Part B (Cybernetics)},
  vol.~40, no.~3, pp. 881--891, 2009.

\bibitem{li2011finite}
S.~Li, H.~Du, and X.~Lin, ``Finite-time consensus algorithm for multi-agent
  systems with double-integrator dynamics,'' \emph{Automatica}, vol.~47, no.~8,
  pp. 1706--1712, 2011.

\bibitem{hong2018novel}
H.~Hong, W.~Yu, J.~Fu, and X.~Yu, ``A novel class of distributed fixed-time
  consensus protocols for second-order nonlinear and disturbed multi-agent
  systems,'' \emph{IEEE Transactions on Network Science and Engineering},
  vol.~6, no.~4, pp. 760--772, 2018.

\bibitem{ning2022fixed}
B.~Ning, Q.-L. Han, Z.~Zuo, L.~Ding, Q.~Lu, and X.~Ge, ``Fixed-time and
  prescribed-time consensus control of multiagent systems and its applications:
  A survey of recent trends and methodologies,'' \emph{IEEE Transactions on
  Industrial Informatics}, vol.~19, no.~2, pp. 1121--1135, 2022.

\bibitem{wang2018prescribed}
Y.~Wang, Y.~Song, D.~J. Hill, and M.~Krstic, ``Prescribed-time consensus and
  containment control of networked multiagent systems,'' \emph{IEEE
  transactions on cybernetics}, vol.~49, no.~4, pp. 1138--1147, 2018.

\bibitem{ding2023prescribed}
T.-F. Ding, M.-F. Ge, C.~Xiong, Z.-W. Liu, and G.~Ling, ``Prescribed-time
  formation tracking of second-order multi-agent networks with directed
  graphs,'' \emph{Automatica}, vol. 152, p. 110997, 2023.

\bibitem{meng2026practical}
M.~Meng, L.~Li, Q.~Kang, and Q.~Gan, ``Practical prescribed-time scaled
  consensus of second-order multi-agent systems via event-triggered and
  time-based generator approaches,'' \emph{Communications in Nonlinear Science
  and Numerical Simulation}, p. 110608, 2026.

\bibitem{wang2018sliding}
G.~Wang, X.~Wang, and S.~Li, ``Sliding-mode consensus algorithms for disturbed
  second-order multi-agent systems,'' \emph{Journal of the Franklin Institute},
  vol. 355, no.~15, pp. 7443--7465, 2018.

\bibitem{wei2025further}
P.~Wei, M.~Luo, J.~Cheng, and X.~Wang, ``Further results on coded-based
  predefined-time consensus via nonsingular sliding mode control for multiple
  aerial vehicles,'' \emph{Information Sciences}, vol. 721, p. 122637, 2025.

\bibitem{shang2021event}
Y.~Shang, C.-L. Liu, and K.-C. Cao, ``Event-triggered consensus control of
  second-order nonlinear multi-agent systems under denial-of-service attacks,''
  \emph{Transactions of the Institute of Measurement and Control}, vol.~43,
  no.~10, pp. 2272--2281, 2021.

\bibitem{liang2025fixed}
J.~Liang, Z.~Chen, Z.~Yu, and H.~Jiang, ``Fixed-time consensus of second-order
  multi-agent systems based on event-triggered mechanism under dos attacks,''
  \emph{AIMS Mathematics}, vol.~10, no.~1, pp. 1501--1528, 2025.

\bibitem{li2026event}
B.~Li, Z.~Wang, W.~Wang, Q.~Gao, and J.~L{\"u}, ``Event-/self-triggered
  communication for dos-resilient consensus in multiagent systems with
  application to leo satellite formation,'' \emph{IEEE Transactions on
  Industrial Informatics}, 2026.

\bibitem{li2026dual}
B.~Li, Q.~Gao, Z.~Wang, W.~Wang, and J.~L{\"u}, ``Dual-terminal dynamic
  event-triggering leader-following and leaderless consensus for multi-agent
  systems under distributed dos attack,'' \emph{International Journal of Robust
  and Nonlinear Control}, 2026.

\bibitem{ye2026prescribed}
Z.~Ye, D.~Zhang, L.~Liu, Y.~Shi, and G.~Feng, ``Prescribed time consensus of
  leader-following multi-agent systems under dos attacks,'' \emph{Automatica},
  vol. 188, p. 112953, 2026.

\bibitem{zhu2026prescribed}
W.~Zhu, Y.~Luo, J.~Cao, D.~Yue, and W.~Xia, ``Prescribed-time event-triggered
  consensus control of nonlinear multi-agent systems under dos attacks and
  time-delays,'' \emph{Neurocomputing}, p. 133360, 2026.

\bibitem{sharma2026selftriggered}
M.~Sharma, H.~Shanmugam, and S.~R. Mahapatro, ``Self-triggered prescribed-time
  time varying formation control of second-order multi-agent systems with
  denial of service ({DoS}) attacks,'' \emph{Scientific Reports}, vol.~16, p.
  26672, 2026.

\bibitem{cao2026distributed}
W.~Cao, L.~Liu, D.~Zhang, and G.~Feng, ``Distributed resilient fixed-time
  control for cooperative output regulation of mass over directed graphs under
  dos attacks,'' \emph{IEEE Transactions on Automatic Control}, 2026.

\bibitem{yang2026finite}
H.~Yang and Y.~Wang, ``Finite-gain based prescribed-time consensus control for
  multi-agent systems under switching topology,'' \emph{Automatica}, vol. 183,
  p. 112686, 2026.

\end{thebibliography}
\end{document}